\newif\ifdraft
\newcommand{\change}[1]{\ifdraft{\color{blue}#1}\else{#1}\fi}
\newtheorem{theorem}{Theorem}
\newtheorem{proposition}{Proposition}
\title{\LARGE \bf
Formal Specification of
Continuum Deformation Coordination
}
\author{Hossein Rastgoftar, Jean-Baptiste Jeannin, and Ella Atkins
\thanks{Authors are with the Aerospace Engineering Department, University of Michigan, Ann Arbor,
MI, 48109 USA e-mails: \{hosseinr, jeannin,ematkins\}@umich.edu }}
\renewcommand\dim{{\ensuremath d}}
\newcommand\htr[1]{#1^{\rm HT}}
\begin{document}

\maketitle
\thispagestyle{empty}
\pagestyle{empty}

\begin{abstract}
Continuum deformation is a leader-follower multi-agent cooperative control approach. Previous work showed a desired continuum deformation can be uniquely defined based on trajectories of $\dim+1$ leaders in a $\dim$-dimensional motion space and acquired by followers through local inter-agent communication. This paper formally specifies continuum deformation coordination in an obstacle-laden environment. Using linear temporal logic (LTL), continuum deformation liveness and safety requirements are defined.
Safety is prescribed by providing conditions on (i)~agent deviation bound, (ii)~inter-agent collision avoidance, (iii) agent containment, (iv)~motion space containment, and (v)~obstacle collision avoidance. 
Liveness specifies a reachability condition on the desired final formation.

\end{abstract}
\section{Introduction}
%
From package delivery and autonomous taxis to military applications, Unmanned Aerial Vehicles (UAV) are changing our daily lives.
Some applications however cannot be achieved by a single UAV, but need a swarm of cooperating UAVs forming a Multi-Agent System (MAS).
Examples of such applications are surveillance, formation flight, and traffic control.
MAS perform critical tasks, and it is becoming increasingly important to formally specify and verify the correctness of their behavior, in terms of both safety and liveness requirements.
In this paper we are primarily interested in formation flying. We treat MAS evolution as a continuum deformation~\cite{rastgoftar2016continuum}, and formally specify its safety and liveness requirements.

Multi-agent system coordination
applies methods such as consensus \cite{liu2014asynchronously,  bidram2013secondary} 
with application to distributed motion control \cite{ren2007information, yu2016distributed}, sensing \cite{li2015distributed, zhang2016distributed}, medical systems \cite{seeff2002national}, and smart grids \cite{zhao2016fully, xing2015distributed}.  
For containment control \cite{cao2011distributed, yoo2015distributed} multiple leaders guide the MAS toward a target shape using consensus to update positions \cite{cao2009containment, cao2011distributed}  under fixed and switching communication topologies  \cite{zhang2017event, li2015containment}. 
Directed communication topologies \cite{mourad2014containment, xu2015containment}, event-based containment control \cite{liu2016event, zhang2017event}, and finite-time containment control  \cite{zhao2015finite} have been formulated.
Formal specification and verification of multi-agent systems \change{have} received considerable attention~\cite{tomlin1998conflict,brazier1995formal,jonker2002compositional,raimondi2007automatic}, and our aim is to extend that work to the context of continuum deformation.
Containment control assures asymptotic convergence to a desired configuration inside the convex region prescribed by leaders but has two limitations: (i) followers are not assured to remain inside the moving convex region defined by leader positions during transition; and (ii) inter-agent collision avoidance cannot be guaranteed for an arbitrary initial agent distribution. Continuum deformation extends containment control theory by prescribing a homogeneous mapping that guarantees inter-agent collision avoidance and that followers remain within the leader-defined boundary \cite{rastgoftar2016continuum, rastgoftar2018continuum}. \change{In a continuum deformation coordination, inter-agent distances can aggressively change while no two particles collide. This property can advance swarm coordination maneuverability and agility, and allows a large-scale MAS to safely negotiate narrow channels in obstacle-laden environments.}

\change{As its main contribution, this paper formally specifies} safety and liveness for the coordination of continuum deformation of an MAS with a \change{large} number of agents (Fig.~\ref{schematic}). Using triangulation and tetrahedralization, safety conditions are defined to assure obstacle collision avoidance, inter-agent collision avoidance, and motion space containment in 2-dimensional and 3-dimensional continuum deformations. This paper also formally specifies a liveness condition that assures continuum deformation is possible given an initial MAS configuration and a motion space obstacle geometry. 

\begin{figure}
\center
\includegraphics[width=3.3 in]{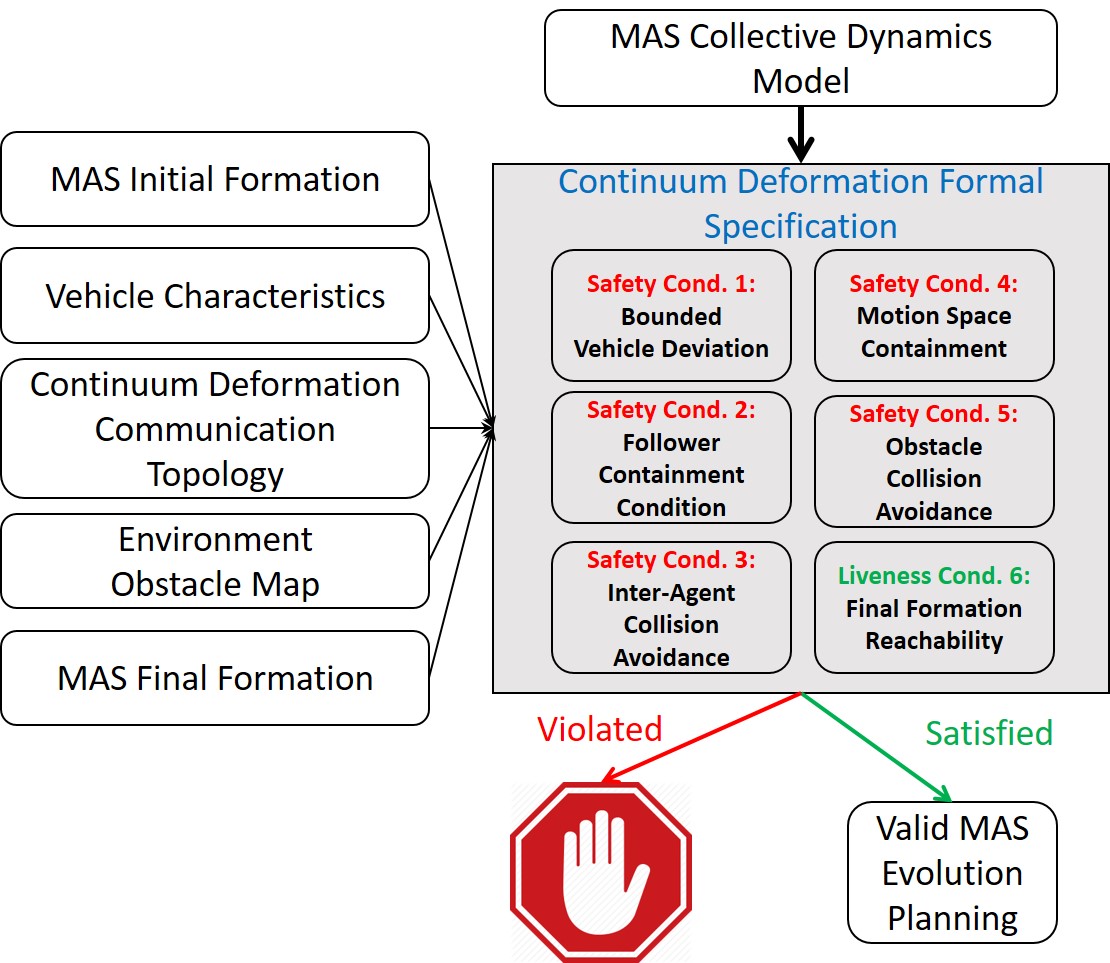}
\caption{Elements of the formal specification.}
\vspace{-20pt}
\label{schematic}
\end{figure}

This paper is organized as follows: In Section \ref{prelim}, preliminaries in triangulation and tetrahedralization, \change{continuum deformation coordination}, graph theory, linear temporal logic, and MAS collective dynamics are reviewed. Continuum deformation formal specification in Section  \ref{problemstatement} is followed by sufficient safety conditions in  Section \ref{Formal Specification}.
Simulation results and conclusions are presented in Sections \ref{Simulation Results} and \ref{Conclusion}, respectively.
\section{Preliminary Notions}
\label{prelim}

\subsection{Triangulation and Tetrahedralization}
\label{sec:triangulation}
To determine whether an agent stays in its designated motion space and does not collide with any obstacle, we need to compute whether this agent is inside or outside a given $\dim$-dimensional polytope. Our approach creates a partition of the polytope into a number of $\dim$-simplexes (i.e., a triangle for $d=2$ or a tetrahedron for $d=3$), thereby reducing the problem to checking whether our agent stays in one of the simplexes.
\change{A $d$-simplex $\mathcal{T}$ is defined as the non-zero volume specified by points $\mathbf{a}_1,\ldots,\mathbf{a}_{d+1}\in\mathbb{R}^d$. Note that $\mathbf{a}_1,\ldots,\mathbf{a}_{d+1}\in\mathbb{R}^d$ form a valid $d$-simplex if \change{and only if} the following rank condition is satisfied:}
\begin{equation}
\label{dsimplex}
    \mathbf{\Lambda}\left(\mathbf{a}_1,\cdots,\mathbf{a}_{\dim+1}\right)=\mathrm{rank}\left(\begin{bmatrix}
\mathbf{a}_2-\mathbf{a}_1&\cdots&\mathbf{a}_{\dim+1}-\mathbf{a}_1
\end{bmatrix}\right)=d,
\end{equation}
%
\change{If \eqref{dsimplex} is satisfied, we can define vector operator $\Omega$ given an arbitrary vector $\mathbf{c}$ and $\mathbf{a}_1,\ldots,\mathbf{a}_{d+1}$:}
\begin{equation}
    \mathbf\Omega\left(\mathbf{a}_1,\cdots,\mathbf{a}_{\dim+1},\mathbf{c}\right)=
    \begin{bmatrix}
    \mathbf{a}_1&\cdots&\mathbf{a}_{\dim+1}\\
    1&\cdots&1\\
    \end{bmatrix}
    ^{-1}
    \begin{bmatrix}
    \mathbf{c}\\
    1
    \end{bmatrix}
    .
    \label{eq:omega}
\end{equation}
Note that  $\Omega\left(\mathbf{a}_1,\cdots,\mathbf{a}_{\dim+1},\mathbf{c}\right)\in\mathbb{R}^{d+1}$, let
\[
\begin{bmatrix}
\alpha_{1}\\\vdots\\\alpha_{\dim+1}
\end{bmatrix}=
\mathbf\Omega\left(\mathbf{a}_1,\cdots,\mathbf{a}_{\dim+1},\mathbf{c}\right).
\]
As shown in Fig. \ref{2Ddisc}, a 2-dimension motion space ($\dim=2$) can be divided into $10$ regions based on the signs of $\alpha_1$, $\alpha_2$, and $\alpha_3$.
Similarly, a 3-dimension motion space can be divided into $55$ regions based on the signs of $\alpha_1$, $\alpha_2$, $\alpha_3$ and $\alpha_4$. In general, we can decide whether $\mathbf{c}$ is inside or outside a simplex based on the signs of $\alpha_1,\ldots,\alpha_{d+1}$:
\begin{figure}
\center
\includegraphics[width=3 in]{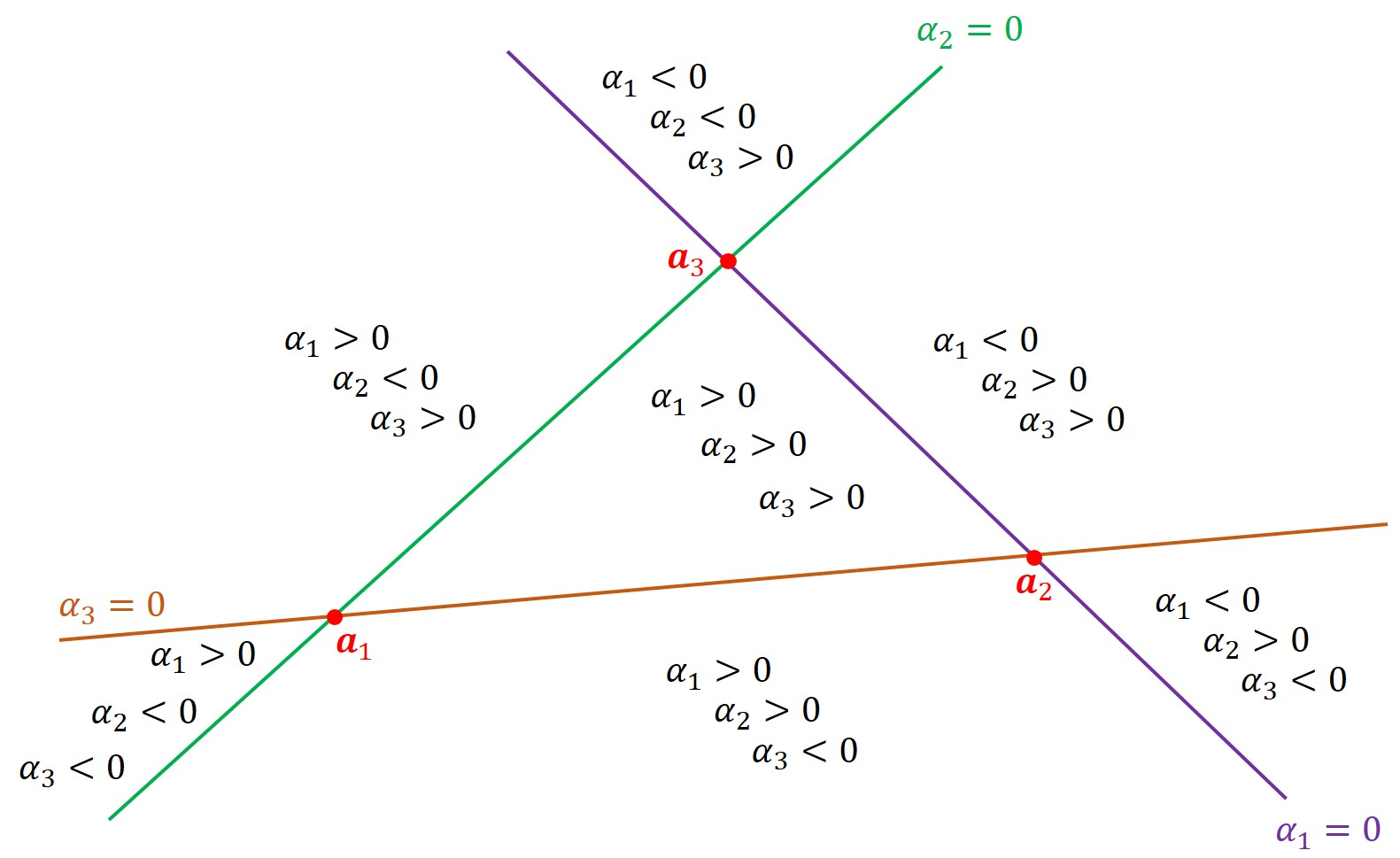}
\caption{2-dimension motion space democratization from $\alpha$ parameters.}
\label{2Ddisc}
\vspace{-20pt}
\end{figure}

\begin{proposition}
The point $\mathbf{c}$ is positioned \underline{inside} the (open) simplex defined by $\mathbf{a}_1$, $\cdots$, $\mathbf{a}_{\dim+1}$ if and only if  $\mathbf{\Omega}\left(\mathbf{a}_1,\cdots,\mathbf{a}_{\dim+1},\mathbf{c}\right)> \mathbf{0}$. 
%
\end{proposition}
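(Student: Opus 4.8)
The plan is to recognize the entries $\alpha_1,\ldots,\alpha_{d+1}$ of $\Omega$ as the barycentric coordinates of $\mathbf{c}$ relative to the vertices $\mathbf{a}_1,\ldots,\mathbf{a}_{d+1}$, and then to identify the open simplex with the set of points whose barycentric coordinates are all strictly positive. First I would unpack \eqref{eq:omega}. Writing $M$ for the $(d+1)\times(d+1)$ matrix whose $i$-th column is $\begin{bmatrix}\mathbf{a}_i\\1\end{bmatrix}$, the definition $\Omega = M^{-1}\begin{bmatrix}\mathbf{c}\\1\end{bmatrix}$ is equivalent to $M\,\Omega = \begin{bmatrix}\mathbf{c}\\1\end{bmatrix}$, which, read row by row, is exactly the pair of conditions
\begin{equation}
\sum_{i=1}^{d+1}\alpha_i\,\mathbf{a}_i = \mathbf{c},
\qquad
\sum_{i=1}^{d+1}\alpha_i = 1.
\label{eq:baryconditions}
\end{equation}

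Before this is well defined I must check that $M$ is invertible. I would argue that the rank condition \eqref{dsimplex} is precisely the invertibility of $M$: any linear relation $\sum_i\beta_i\begin{bmatrix}\mathbf{a}_i\\1\end{bmatrix}=\mathbf{0}$ forces $\sum_i\beta_i=0$ from the bottom row, whence $\sum_{i\ge 2}\beta_i(\mathbf{a}_i-\mathbf{a}_1)=\mathbf{0}$ from the top rows, so the columns of $M$ are independent exactly when $\mathbf{a}_2-\mathbf{a}_1,\ldots,\mathbf{a}_{d+1}-\mathbf{a}_1$ are, i.e. exactly when \eqref{dsimplex} holds. With $M$ invertible, $\Omega$ is the \emph{unique} solution of \eqref{eq:baryconditions}. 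The two directions then follow quickly. For the forward implication, if $\mathbf{c}$ lies in the open simplex then by definition $\mathbf{c}=\sum_i\lambda_i\mathbf{a}_i$ with all $\lambda_i>0$ and $\sum_i\lambda_i=1$; uniqueness forces $\alpha_i=\lambda_i>0$, so $\Omega>\mathbf{0}$. For the converse, if every $\alpha_i>0$ then \eqref{eq:baryconditions} exhibits $\mathbf{c}$ as a convex combination of the vertices with strictly positive weights, which is membership in the open simplex.

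The single step deserving care is the geometric identification used above: that the open (full-dimensional) simplex coincides with the set of strictly positive convex combinations of the vertices. Because \eqref{dsimplex} guarantees the vertices are affinely independent and span $\mathbb{R}^d$, the simplex is $d$-dimensional and its relative interior equals its topological interior, which is exactly the locus of all-positive barycentric coordinates; the bounding faces correspond to one or more $\alpha_i$ vanishing and the exterior to some $\alpha_i<0$. I expect this identification, rather than the linear-algebra bookkeeping, to be the crux, though for a simplex it is classical and may simply be taken as the definition of the open simplex.
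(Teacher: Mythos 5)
Your proof is correct. The paper itself states this proposition without proof, treating it as a classical fact about barycentric coordinates, and your argument --- showing the rank condition \eqref{dsimplex} is equivalent to invertibility of the bordered matrix, identifying the entries of $\mathbf{\Omega}$ as the unique barycentric coordinates of $\mathbf{c}$, and characterizing the open simplex as the locus of strictly positive coordinates --- is precisely the standard justification and matches the interpretation the paper implicitly relies on when it partitions the motion space into regions by the signs of $\alpha_1,\ldots,\alpha_{d+1}$ (Fig.~\ref{2Ddisc}).
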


We use the term ``containment'' when a point $\mathbf{c}$ is inside a $d$-polytope, which typically represents a simplex of leaders, the motion space or an obstacle.

\subsection{Continuum Deformation Definition}
Consider an MAS consisting of $N$ agents identified by unique index numbers $\mathcal{V}=\{1,\cdots,N\}$.   Agents $1$ through $\dim+1$ are leaders and the remaining agents are followers acquiring the desired coordination through local communication, e.g. $\mathcal{V}_L=\{1,\cdots,\dim+1\}$ is the set of leaders and $\mathcal{V}_F=\{\dim+2,\cdots,N\}$ is the set of followers. 
We denote by ${\mathbf{r}}_i\left(t\right)$ the actual position of agent $i$ at time $t$, and by $\htr{\mathbf{r}}_i\left(t\right)$ its desired position at time $t$.
The $j$-th coordinate of ${\mathbf{r}}_i$ is denoted as $r_{i,j}$, and the $j$-th coordinate of $\htr{\mathbf{r}}_i$ is denoted as $\htr r_{i,j}$.
Let $\mathbf{r}_{i}^0$ and $\mathbf{r}_{i}^f$ denote initial and final positions of agent $i\in \mathcal{V}$, respectively.
The desired position of agent $i$  is defined by:
\begin{equation}
\label{homogtransform}
    \htr{\mathbf{r}}_i\left(t\right)=\mathbf{Q}\left(t,t_0\right)\mathbf{r}_{i}^0+\mathbf{d}\left(t,t_0\right),
\end{equation}
where $\mathbf{r}_{i}^0=\htr{\mathbf{r}}_{i}\left(t_0\right)$, ${\htr{\mathbf{r}}_{i}}^f=\htr{\mathbf{r}}_{i}\left(t_f\right)$ ($i\in \mathcal{V}$), $t_0$ and $t_f$ denote initial and final time,
$\mathbf{Q}\left(t,t_0\right)\in \mathbb{R}^{\dim\times \dim}$ is the Jacobian matrix, $\mathbf{Q}\left(t_0,t_0\right)=\mathbf{I}_\dim\in \mathbb{R}^{\dim\times \dim}$ is the identity matrix,  $\mathbf{d}\left(t,t_0\right)\in \mathbb{R}^{\dim\times 1}$ is the rigid-body displacement vector, and $\mathbf{d}\left(t_0,t_0\right)=\mathbf{0}\in \mathbb{R}^{\dim\times 1}$. \change{The affine transformation \eqref{homogtransform} is called  \emph{homogeneous transformation} in continuum mechanics \cite{lai2009introduction}.} 

\change{In a homogeneous transformation coordination, leaders form \change{a} $\dim$-dimensional \emph{leading polytope} at any time $t$, therefore}
\begin{equation}
\label{leadersrankcondition}
\forall t,\qquad \mathbf{\Lambda}\left(\htr{\mathbf{r}}_{1},\cdots,\htr{\mathbf{r}}_{\dim+1} \right)=\dim.
\end{equation}
%
\change{Because homogeneous transformation is a linear mapping}, $\mathbf{Q}$ and $\mathbf{D}$ elements are uniquely related to leader position components by
\begin{equation}
\label{QD}
\forall t,\qquad 
\begin{bmatrix}
 \mathrm{vec}\left(\mathbf{Q}^T\right)\\
    \mathbf{d}
\end{bmatrix}
=
\begin{bmatrix}
   \mathbf{I}_{\dim}\otimes \mathbf{P}(t_0)&\mathbf{I}_\dim\otimes \mathbf{1}_{\dim\times 1}
\end{bmatrix}
\mathrm{vec}\left(\mathbf{P}(t)\right),
\end{equation}
where "$\otimes$" is the Kronecker product, $\mathbf{1}_\dim\in \mathbb{R}^{\left(\dim+1\right)\times 1}$ is the one-entry matrix, and
\[
\mathbf{P}(t)=
\begin{bmatrix}
r_{1,1}^{\mathrm{HT}}&\cdots&r_{1,d}^{\mathrm{HT}}\\
\vdots&\vdots&\vdots\\
r_{d+1,d}^{\mathrm{HT}}&\cdots&r_{d+1,d}^{\mathrm{HT}}\\
\end{bmatrix}
\in \mathbb{R}^{\left(\dim+1\right)\times \dim}.
\]
$\mathbf{\Omega}\left(\htr{\mathbf{r}}_{1}\left(t\right),\cdots,\htr{\mathbf{r}}_{\dim+1}\left(t\right),\htr{\mathbf{r}}_i\left(t\right)\right)\in \mathbb{R}^{\left(\dim+1\right)\times 1}$ remains time-invariant at any time $t\in [t_0,t_f]$:
\begin{equation}
\forall t\in [t_0,t_f],\forall i\in \mathcal{V},\qquad  \mathbf{\Omega}\left(\htr{\mathbf{r}}_{1},\cdots,\htr{\mathbf{r}}_{\dim+1},\htr{\mathbf{r}}_i\right)=\mathbf{\Omega}_{i,0},
\end{equation}
is time-invariant, where
\[
    \forall i\in \mathcal{V},\qquad \mathbf{\Omega}_{i,0}=\mathbf{\Omega}\left(\mathbf{r}_{1}^0,\cdots,\change{\mathbf{r}_{\dim+1}^{\change{0}},\mathbf{r}_{i}^{\change{0}}}\right)\in \change{\mathbb{R}^{\dim+1}}.
\]

\noindent
\textbf{Assumption:} This paper assumes follower agents are positioned inside the leading simplex at initial time $t_0$: 
\[
\forall i\in \mathcal{V}_F,\qquad  \mathbf{\Omega}_{i,0}>0.
\]
\subsection{Continuum Deformation Acquisition}
Assume directed graph $\mathcal{G}=\mathcal{G}\left(\mathcal{V},\mathcal{E}\right)$ defines \change{a fixed inter-agent communication topology}, $\mathcal{V}$ is the node set and $\mathcal{E}\subset \mathcal{V}\times \mathcal{V}$ is the edge set. Follower $i\in \mathcal{V}_F$ communicates with $\dim+1$ in-neighbor agents defined by
set $\mathcal{N}_i=\{i_1,\cdots,i_{\dim+1}\}\subset \mathcal{V}$. 
It is assumed that $\mathbf{\Lambda}\left(\mathbf{r}_{i_1,0},\cdots,\mathbf{r}_{i_{\dim+1},0}\right)=\dim$ ($\forall i\in \mathcal{V}_F$), so in-neighbor agents of follower $i$ form an $\dim$-dimensional simplex at initial time $t_0$. Follower inter-agent communications are weighted and obtained from
\begin{equation}
\label{ComWeights}
    \begin{bmatrix}
    w_{i,i_1}&\cdots&w_{i,i_{\dim+1}}
    \end{bmatrix}
    ^T=\mathbf{\Omega}\left(\mathbf{r}_{i_1}^0,\cdots,\mathbf{r}_{i_{\dim+1}}^0,\mathbf{r}_{i}^0\right).
\end{equation}
\change{Note that $w_{i,i_k}$ is the communication weight between follower $i\in \mathcal{V}_F$ and in-neghbpor agent $i_k\in \mathcal{N}_i$ ($k=1,\cdots,d+1$).}


\subsection{MAS Collective Dynamics Model}
\label{MAS Collective Dynamics Model} 
Let $\mathbf{r}_i\in \mathbb{R}^{\dim\times 1}$ denote actual position of agent $i\in \mathcal{V}$.
\begin{equation}
    \dfrac{\mathrm{d}^2\mathbf{r}_i}{\mathrm{d}t^2}=\mathbf{u}_i,
\end{equation}
where
\begin{equation}
\begin{split}
    &\mathbf{u}_i=
    \begin{cases}
    \htr{\ddot{\mathbf{r}}}_i\left(\mathrm{given}\right)&i\in \mathcal{V}_L\\
    \beta_v\sum_{j\in \mathcal{N}_i}w_{i,j}\left(\dot{\mathbf{r}}_j-\dot{\mathbf{r}}_i\right)+\beta_r\sum_{j\in \mathcal{N}_i}w_{i,j}\left({\mathbf{r}}_j-{\mathbf{r}}_i\right)&i\in \mathcal{V}_F.\\
    \end{cases}
\end{split}
\end{equation}
For continuum deformation communication weights are consistent with agents' positions at $t_0$ and assigned by Eq. \eqref{ComWeights}.

\subsection{Temporal Logic}
Temporal Logic (TL) can capture temporal behavior of a dynamical system.
In this paper we use a logic based on LTL${}_{-X}$~\cite{DBLP:journals/tac/KloetzerB08}. The logic LTL${}_{-X}$ is a flavour of Linear Temporal Logic without the Next operator X (sometimes written $\circ$), which makes it more adapted to reasoning about continuous-time systems.
Since we are reasoning about an explicit system, we make our atomic formulas concrete, as comparisons of expressions.
Our logic uses two syntactic categories: expressions $e$ and propositions $\phi$. 
An expression $e$ can be a constant $c$, a state variable representing the $j$-th coordinate of the actual position of agent $j$, ${r}_{i,j}$, a state variable representing the $j$-th coordinate of the desired position of agent $j$, $\htr{{r}}_{i,j}$, as well as a multiplication $e_1\times e_2$, addition $e_1+e_2$, subtraction $e_1-e_2$, or division $e_1/e_2$ of two expressions. A formula can be True $\top$, a comparison of two expressions $e_1\leq e_2$, or a disjunction $\phi_1\lor\phi_2$, negation $\lnot\phi$ or until $\phi_1\mathcal{U}\phi_2$ of two formulas.
\begin{align*}
    e & ::=\ c\ |\ {{r}}_{i,j}\ |\ \htr{{r}}_{i,j}\ |\ e\times e\ |\ e+e\ |\ e-e\ |\ e/e\\
    \phi & ::=\ \top\ |\ e\leq e\ |\ \phi\lor\phi\ |\ \lnot\phi\ |\ \phi\mathcal{U}\phi
\end{align*}
We call \emph{atomic formulas} the formulas of the form $e\leq e$.
As is usual in LTL, we define the operators False $\bot$, conjunction $\land$, 
always $\Box$ and eventually $\Diamond$ as:
\begin{align*}
 \bot &= \lnot{\top}       & \Diamond\phi &= \top\mathcal{U}\phi \\
 \phi_1\land\phi_2 & = \lnot(\lnot \phi_1\lor\lnot\phi_2)  & \Box\phi & =  \lnot\Diamond\lnot\phi
\end{align*}
For any time $t\geq 0$, the state $\mathcal{S}(t)$ of our system is a function giving the valuation of every state variable: $\mathcal{S}(t):\{{r}_{1,1},\ldots,{{r}}_{N,\dim},\htr{{r}}_{1,1},\ldots,\htr{{r}}_{N,\dim}\}\rightarrow\mathbb{R}$
Given such a state $\mathcal{S}(t)$ for the valuation of state variables, an expression $e$ can be evaluated in the usual way to a real number that we write $\mathcal{S}(t)(e)$. The satisfaction of formula $\phi$ in state $\mathcal{S}(t)$ (i.e., at time $t$) is then given by:
\begin{align*}
    S(t)&\vDash \top \text{ is always satisfied;}\\
    S(t)&\vDash e_1\leq e_2 \text{ if and only if } \mathcal{S}(t)(e_1)\leq\mathcal{S}(t)(e_2);\\
    S(t)&\vDash \lnot\phi \text{ if and only if }S(t)\not\vDash\phi;\\
    S(t)&\vDash \phi_1\lor\phi_2 \text{ if and only if}S(t)\vDash\phi_1\text{ or }S(t)\vDash\phi_2;\\
    S(t)&\vDash \phi_1\mathcal{U}\phi_2 \text{ if and only if there exists $t'\geq t$ such that}\\
    & \text{$S(t')\vDash \phi_2$ and for all $t\leq t''< t'$ we have $S(t'')\vDash \phi_1$.}
\end{align*}

For convenience, we write $e^2$ for the expression $e\times e$; $\|\mathbf{r}_i-\htr{\mathbf{r}}_i\|_2^2$ for the expression 
$({r}_{i,1}-{r}_{i,1})^2+\cdots+({r}_{i,\dim}-\htr{{r}}_{i,\dim})^2$;
and $\mathbf{\Omega}\left(\htr{\mathbf{r}}_{1},\cdots,\htr{\mathbf{r}}_{\dim+1},\mathbf{r}_i\right)$ as in Equation~\ref{eq:omega} (Section~\ref{sec:triangulation}).

\section{Formal Specification}
\label{problemstatement}
This paper's first objective is to formally specify safety requirements for continuum deformation. MAS continuum deformation is considered safe if the following requirements are satisfied: (1)~Bounded deviation, (2)~Follower containment guarantee, (3)~Inter-agent collision avoidance, (4)~Motion-space containment, and (5)~Obstacle collision avoidance.

The paper's second objective is to formally specify a liveness condition: agent desired final position reachability. 


\textbf{Definition 1 (Motion Space):} The motion space, denoted by $\mathbf{B}\subset \mathbb{R}^\dim$, is finite and convex. Let $\mathbf{B}$ enclose $m_B$ \change{simplexes} $\mathbf{B}_1$, $\cdots$, $\mathbf{B}_{m_B}$, e.g. $\bigcup_{i=1}^{m_B}\mathbf{B}_i\subset \mathbf{B}$. $\mathbf{B}_i$ is a \change{$\dim$-dimensional} simplex with vertices at $\mathbf{b}_{i,1}\in \mathbb{R}^{\dim\times 1}$, $\cdots$ $\mathbf{b}_{i,\dim+1}\in\mathbb{R}^{\dim\times 1}$.

\textbf{Definition 2 (Obstacle):} Let $\mathcal{O}\subset \mathbb{R}^\dim$ be a finite set defining motion space obstacles. Let $\mathcal{O}$ encompass $m_O$ simplex es $\mathbf{O}_1$, $\cdots$, $\mathbf{O}_{m_O}$, e.g. $\mathcal{O}\subset \bigcup_{i=1}^{m_O}\mathbf{O}_i$. $\mathbf{O}_i$ is an $\dim$-dimensional simplex with vertices $\mathbf{o}_{i,1}\in \mathbb{R}^{\dim\times 1}$, $\cdots$ $\mathbf{o}_{i,\dim+1}\in \mathbb{R}^{\dim\times 1}$.


\renewcommand\theparagraphdis{\arabic{paragraph})}

\paragraph{Safety Condition 1: Bounded Vehicle Deviation}
\noindent \change{Deviation of every agent from continuum deformation must not exceed $\delta$, i.e., the actual position $\mathbf{r}_i$ ($i\in \mathcal{V}$) of every agent must stay within $\delta$ of its desired position $\mathbf{r}^{HT}_i$.}  This requirement can be expressed as:
\begin{equation}
\label{deltaineq}
    \fbox{$\displaystyle\bigwedge_{i\in \mathcal{V}}\Box\left(\|\mathbf{r}_i-\htr{\mathbf{r}}_i\|_2^2\leq \delta^2\right),\tag{$\psi_1$}$}
\end{equation}
where $\delta$ is constant and $\|\cdot\|_2$ is the 2-norm symbol. 

\paragraph{Safety Condition 2: Follower Containment Condition}
\noindent Follower $i\in \mathcal{V}_F$ must be inside the leading simplex at any time $t$. This condition can be expressed as:
\begin{equation*}
    \forall i\in \mathcal{V}_F,\forall t\geq t_0\qquad 
    \mathbf{r}_i\in
    \mathcal{T}(\htr{\mathbf{r}}_{1},\cdots,\htr{\mathbf{r}}_{\dim+1})
\end{equation*}
which can be expressed in our logic using the function $\mathbf{\Omega}$ as:
\begin{equation}
\label{Followerrcontainnment}
    \fbox{$\displaystyle\bigwedge_{i\in \mathcal{V}_F}\Box\left(\mathbf{\Omega}\left(\htr{\mathbf{r}}_{1},\cdots,\htr{\mathbf{r}}_{\dim+1},\mathbf{r}_i\right)\geq 0\right).\tag{$\psi_2$}$}
\end{equation}

\paragraph{Safety Condition 3: Inter-Agent Collision Avoidance}
\label{Inter-Agent Collision Avoidance and UAV Containment}
\noindent Assume every agent is enclosed by a ball of radius $\epsilon$. Collision avoidance between any two different agents $i$ and $j$ is satisfied, if \change{and only if:}
\begin{equation}
\label{MicColAvoid}
  \fbox{$\displaystyle\bigwedge_{i,j\in \mathcal{V},~
    i\neq j}\Box\left(\|\mathbf{r}_{i}-\mathbf{r}_j\|_2^2\geq (2\epsilon)^2\right).$} \tag{$\psi_3$}
\end{equation}

\paragraph{Safety Condition 4: Motion Space Containment}
\noindent Motion space containment is satisfied, if
\[
\forall i\in \mathcal{V},\forall t\geq t_0\qquad \mathbf{r}_i\in \mathbf{B}
\]
which can be expressed in our logic using the function $\mathbf{\Omega}$ as:
\begin{equation}\label{motionspacecontainment}
\begin{split}
\fbox{$\displaystyle\bigwedge_{i\in \mathcal{V}}\Box\bigvee_{k=1}^{m_B}\left(\mathbf{\Omega}\left(\mathbf{b}_{k,1},\cdots,\mathbf{b}_{k,\dim+1},\mathbf{r}_i\right)\geq 0\right).$}\\
\end{split}
\tag{$\psi_4$}
\end{equation}
Eq. \eqref{motionspacecontainment} ensures existence of a simplex $\mathbf{B}_i\subset \mathbf{B}$ enclosing leader $i\in \mathcal{V}_L$ at any time $t$.

\paragraph{Safety Condition 5: Obstacle Collision Avoidance}
\noindent Obstacle collision avoidance is satisfied if
\[
\forall i\in \mathcal{V},\forall t\geq t_0,\qquad \Box(\mathbf{r}_i\notin \mathcal{O}).
\]
which can be expressed in our logic using the function $\mathbf{\Omega}$ as: 
\begin{subequations}
\begin{equation}\label{obstacleavoidance}
\begin{split}
    \fbox{$\displaystyle\bigwedge_{i\in \mathcal{V}}\Box\left(\bigwedge_{k=1}^{m_O}\neg\left(\mathbf{\Omega}\left(\mathbf{o}_{k,1},\cdots,\mathbf{o}_{k,m_B},\mathbf{r}_i\right)\geq 0\right)\right).$}
\end{split}
\tag{$\psi_5$}
\end{equation}
\end{subequations}
Eq. \eqref{obstacleavoidance} ensures every agent $i\in \mathcal{V}$ is outside the obstacle zone defined by simplexes $\mathbf{O}_1$, $\cdots$, $\mathbf{O}_{m_O}$. 

\paragraph{Liveness Condition 6: Final Formation Rechability}
\noindent Given agent desired final positions $\mathbf{r}_{1}^f$, $\cdots$, $\mathbf{r}_{N}^f$, the liveness condition is defined by:
\begin{equation}
    \fbox{$\displaystyle\Diamond\Box\bigwedge_{i\in \mathcal{V}}\left(\|\mathbf{r}_i- \mathbf{r}_{i}^f\|_2^2\leq\varepsilon^2\right)\tag{$\psi_6$}.$}
\end{equation}

 \section{Sufficient Conditions}\label{Formal Specification}
 \subsection{Inter-Agent Collision Avoidance and Agent Containment}
 It is computationally expensive to ensure inter-agent collision avoidance and follower containment using Eqs. \eqref{MicColAvoid} and \eqref{Followerrcontainnment}. We can instead use the sufficient conditions provided in Theorem \ref{theorm1} to guarantee these two MAS safety constraints at less computational cost.   
\begin{theorem}\label{theorm1}\cite{rastgoftar2016continuum}
 Let $D_B$ denote minimum separation distance between two agents at  initial time $t_0$, and let $D_S$ denote the minimum boundary distance at initial time $t_0$. Define
\[
\delta_{\mathrm{max}}=\min\left\{{1\over 2}\left(D_B-2\epsilon\right),\left(D_S-\epsilon\right)\right\}
\]
and
\begin{equation}
\label{Lmin}
\lambda_{\mathrm{min}}=\dfrac{\delta+\epsilon}{\delta_{\mathrm{max}}+\epsilon}.    
\end{equation}
Inter-agent collision avoidance and agent containment \change{are} guaranteed, if the
eigenvalues of pure deformation matrix $\mathbf{U}_D=\left(\mathbf{Q}^T\mathbf{Q}\right)^{1\over 2}$, denoted $\lambda_1$, $\lambda_2$, and $\lambda_3$, satisfy 
\begin{equation}
\label{Interagentcollisionavoidance}
\forall t\geq0,\qquad   \bigwedge_{i=1}^3 \left(\lambda_{\mathrm{min}}\leq \big|\lambda_i\left(t\right)\big|\right),
\end{equation}
and no agent deviation exceeds $\delta$ at any time $t$. 
\end{theorem}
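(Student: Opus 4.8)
The plan is to exploit the polar decomposition $\mathbf{Q}=\mathbf{R}\,\mathbf{U}_D$, where $\mathbf{R}$ is orthogonal and $\mathbf{U}_D=(\mathbf{Q}^T\mathbf{Q})^{1/2}$ is symmetric positive definite with eigenvalues equal to the singular values of $\mathbf{Q}$. Since $\mathbf{R}$ preserves the Euclidean norm, every length distortion produced by the homogeneous transformation \eqref{homogtransform} is governed by $\mathbf{U}_D$ alone; in particular, for any vector $\mathbf{v}$ one has $\|\mathbf{Q}\mathbf{v}\|_2=\|\mathbf{U}_D\mathbf{v}\|_2\geq\lambda_{\mathrm{min}}\|\mathbf{v}\|_2$ under hypothesis \eqref{Interagentcollisionavoidance}. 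I would prove the two guarantees separately, in each case first bounding the distance between the relevant \emph{desired} positions (which are exact images of the initial configuration under \eqref{homogtransform}) and then absorbing the at-most-$\delta$ deviation of the \emph{actual} positions through a triangle inequality.

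For inter-agent collision avoidance, subtracting \eqref{homogtransform} for agents $i$ and $j$ cancels the displacement $\mathbf{d}$, giving $\htr{\mathbf{r}}_i-\htr{\mathbf{r}}_j=\mathbf{Q}(\mathbf{r}_i^0-\mathbf{r}_j^0)$ and hence $\|\htr{\mathbf{r}}_i-\htr{\mathbf{r}}_j\|_2\geq\lambda_{\mathrm{min}}\,\|\mathbf{r}_i^0-\mathbf{r}_j^0\|_2\geq\lambda_{\mathrm{min}}D_B$. Applying the triangle inequality with $\|\mathbf{r}_i-\htr{\mathbf{r}}_i\|_2\leq\delta$ and $\|\mathbf{r}_j-\htr{\mathbf{r}}_j\|_2\leq\delta$ yields $\|\mathbf{r}_i-\mathbf{r}_j\|_2\geq\lambda_{\mathrm{min}}D_B-2\delta$. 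Since $\delta_{\mathrm{max}}\leq\tfrac12(D_B-2\epsilon)$ implies $D_B\geq2(\delta_{\mathrm{max}}+\epsilon)$, substituting the definition \eqref{Lmin} of $\lambda_{\mathrm{min}}$ gives $\lambda_{\mathrm{min}}D_B\geq2(\delta+\epsilon)$, so $\|\mathbf{r}_i-\mathbf{r}_j\|_2\geq2\epsilon$, which is exactly \eqref{MicColAvoid}.

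For follower containment I would track the distance from the desired follower position to each bounding face of the desired leading simplex, and the key step is to show this distance does not contract below $\lambda_{\mathrm{min}}$ times its initial value. Writing a face as a hyperplane $\{\mathbf{x}:\mathbf{n}^T\mathbf{x}=c\}$ with $\|\mathbf{n}\|_2=1$ at time $t_0$, the affine map \eqref{homogtransform} sends it to the hyperplane with unnormalized normal $\mathbf{Q}^{-T}\mathbf{n}$, and a direct computation shows the signed numerator $\mathbf{n}^T\mathbf{r}_i^0-c$ is invariant, so the new point-to-face distance equals $(\mathbf{n}^T\mathbf{r}_i^0-c)/\|\mathbf{Q}^{-T}\mathbf{n}\|_2$. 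Bounding $\|\mathbf{Q}^{-T}\mathbf{n}\|_2\leq\|\mathbf{Q}^{-1}\|_2\leq1/\lambda_{\mathrm{min}}$ then gives a desired boundary distance of at least $\lambda_{\mathrm{min}}D_S$ at every time. One more triangle inequality with the $\delta$-deviation leaves the actual follower at distance at least $\lambda_{\mathrm{min}}D_S-\delta$ inside each face; since $\delta_{\mathrm{max}}\leq D_S-\epsilon$ and \eqref{Lmin} give $\lambda_{\mathrm{min}}D_S\geq\delta+\epsilon>\delta$, the follower remains strictly inside, establishing \eqref{Followerrcontainnment}.

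The main obstacle is the containment step: unlike collision avoidance, which only needs the crude lower bound $\|\mathbf{Q}\mathbf{v}\|_2\geq\lambda_{\mathrm{min}}\|\mathbf{v}\|_2$, here I must control how a \emph{hyperplane} and the associated point-to-hyperplane distance transform under the affine map, which brings in $\mathbf{Q}^{-T}$ and requires identifying its relevant operator norm with $1/\lambda_{\mathrm{min}}$. Care is also needed because the invariant barycentric coordinates $\mathbf{\Omega}_{i,0}>0$ guarantee the \emph{desired} follower stays inside the deformed simplex, but the Euclidean margin to the boundary — the quantity that must beat $\delta$ — is precisely what the singular-value bound on $\mathbf{U}_D$ is needed to certify.
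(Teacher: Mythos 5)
Your proof is correct, and its collision-avoidance half coincides with the paper's own argument: the paper likewise lower-bounds deformed distances through $\mathbf{U}_D$, writing $\left(\mathbf{r}_{m_2}-\mathbf{r}_{m_1}\right)^T\left(\mathbf{r}_{m_2}-\mathbf{r}_{m_1}\right)=\left(\mathbf{r}_{m_2}^0-\mathbf{r}_{m_1}^0\right)^T\mathbf{U}_D^2\left(\mathbf{r}_{m_2}^0-\mathbf{r}_{m_1}^0\right)$, bounds this below by the smallest eigenvalue squared, and then absorbs the $\delta$-deviations by the triangle inequality exactly as you do. Where you genuinely depart is the containment half. The paper makes a single argument serve both constraints by introducing $\mu\in\{1,2\}$ and the pair $(m_1,m_2)$ realizing $\delta_{\mathrm{max}}$ at $t_0$ --- either two agents ($\mu=2$) or a follower together with its nearest point on the boundary of the leading simplex ($\mu=1$), with $\|\mathbf{r}_{m_1}^0-\mathbf{r}_{m_2}^0\|_2=\mu\left(\delta_{\mathrm{max}}+\epsilon\right)$ --- and pushes that pair forward through $\mathbf{Q}$. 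You instead transform each supporting hyperplane of the leading simplex under the affine map, using the invariance of the signed numerator $\mathbf{n}^T\mathbf{r}_i^0-c$ and the bound $\|\mathbf{Q}^{-T}\mathbf{n}\|_2\leq 1/\lambda_{\mathrm{min}}$. Your route is heavier (it requires $\mathbf{Q}^{-T}$ and identifying its operator norm with the reciprocal of the smallest eigenvalue of $\mathbf{U}_D$), but it is airtight at the one place where the paper's sketch is not: pushing forward the initially nearest boundary point $m_2$ only bounds the distance from the deformed follower to that single image point, whereas surviving a $\delta$-deviation requires a lower bound on the distance to the \emph{entire} deformed boundary; to make the paper's version rigorous one must add an infimum over all boundary points (or, equivalently, carry out precisely your hyperplane computation). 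Conversely, the paper's $\mu$-device buys a compact, uniform treatment of both safety conditions in a few lines, at the cost of that gap and of leaving the final triangle-inequality step implicit.
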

\begin{proof}\cite{rastgoftar2016continuum}
Let $m_1$ and $m_2$ denote two points of the leading simplex that has the minimum separation distance at $t_0$. If $\delta_{\mathrm{max}}={1\over 2}\left(D_B-\epsilon\right)$ then $m_1, m_2\in \mathcal{V}$ are two agents (Fig. \ref{InirFormDsDb}(c)). Otherwise, $m_1\in \mathcal{V}_F$ is the index number of a follower and $m_2$ denotes a point on the boundary of the leading simplex having minimum distance from $m_1$ (Fig. \ref{InirFormDsDb}(b)):
\[
\|\mathbf{r}_{m_1}^0-\mathbf{r}_{m_2}^0\|_2=\mu\left(\delta_{\mathrm{max}}+\epsilon\right),
\]
where 
\[
\mu=
\begin{cases}
2& m_1,m_2\in \mathcal{V}\\
1&m_1\in \mathcal{V}_F,~m_2~\mathrm{is~at~the~leading~polytope~boundary.}
\end{cases}
\]
Considering Eq. \eqref{homogtransform},
\[
\begin{split}
\left(\mathbf{r}_{m_2}-\mathbf{r}_{m_1}\right)^T\left(\mathbf{r}_{m_2}-\mathbf{r}_{m_1}\right)=&\left(\mathbf{r}_{m_2}^0-\mathbf{r}_{m_1}^0\right)^T\mathbf{U}_D^2\left(\mathbf{r}_{m_2}^0-\mathbf{r}_{m_1}^0\right).
\end{split}
\]
Assume 
\[
\forall i,j\in \mathcal{V},i\neq j,\qquad \Box\left(\left(\delta+\epsilon\right)\leq \min\|\mathbf{r}_i-\mathbf{r}_j\|_2\right),
\]
then, inter-agent collision avoidance is ensured if inequality \eqref{deltaineq} is satisfied. This implies that
\[
\begin{split}
\mu^2\left(\delta+\epsilon\right)^2\leq& \min \big\{\lambda_1^2,\lambda_2^2,\lambda_3^2\big\}\mu^2\left(\delta_{\mathrm{max}}+\epsilon\right)^2\\
\leq &\left(\mathbf{r}_{m_2,0}-\mathbf{r}_{m_1,0}\right)^T\mathbf{U}_D^2\left(\mathbf{r}_{m_2,0}-\mathbf{r}_{m_1,0}\right).
\end{split}
\]
In other words, inter-agent collision avoidance is avoided if
\[
\forall t,~i=1,2,3,\qquad \left(\left(\dfrac{\delta+\epsilon}{\delta_{\mathrm{max}}+\epsilon}\right)^2\leq \lambda_i^2\left(t\right)\right).
\]
Consequently,  inter-agent collision is avoided if inequality \eqref{Interagentcollisionavoidance} is satisfied. Because $\mathbf{Q}$ is nonsingular at any time $t$ and $\mathbf{Q}(t_0,t_0)=\mathbf{I}_d$, $\mathbf{U}_D$ eigenvalues are always positive. Therefore, Eq. \eqref{Interagentcollisionavoidance} is satisfied.

\end{proof}

\begin{figure*}[!ht]
 \centering
  \subfigure[]{\includegraphics[width=0.28\linewidth]{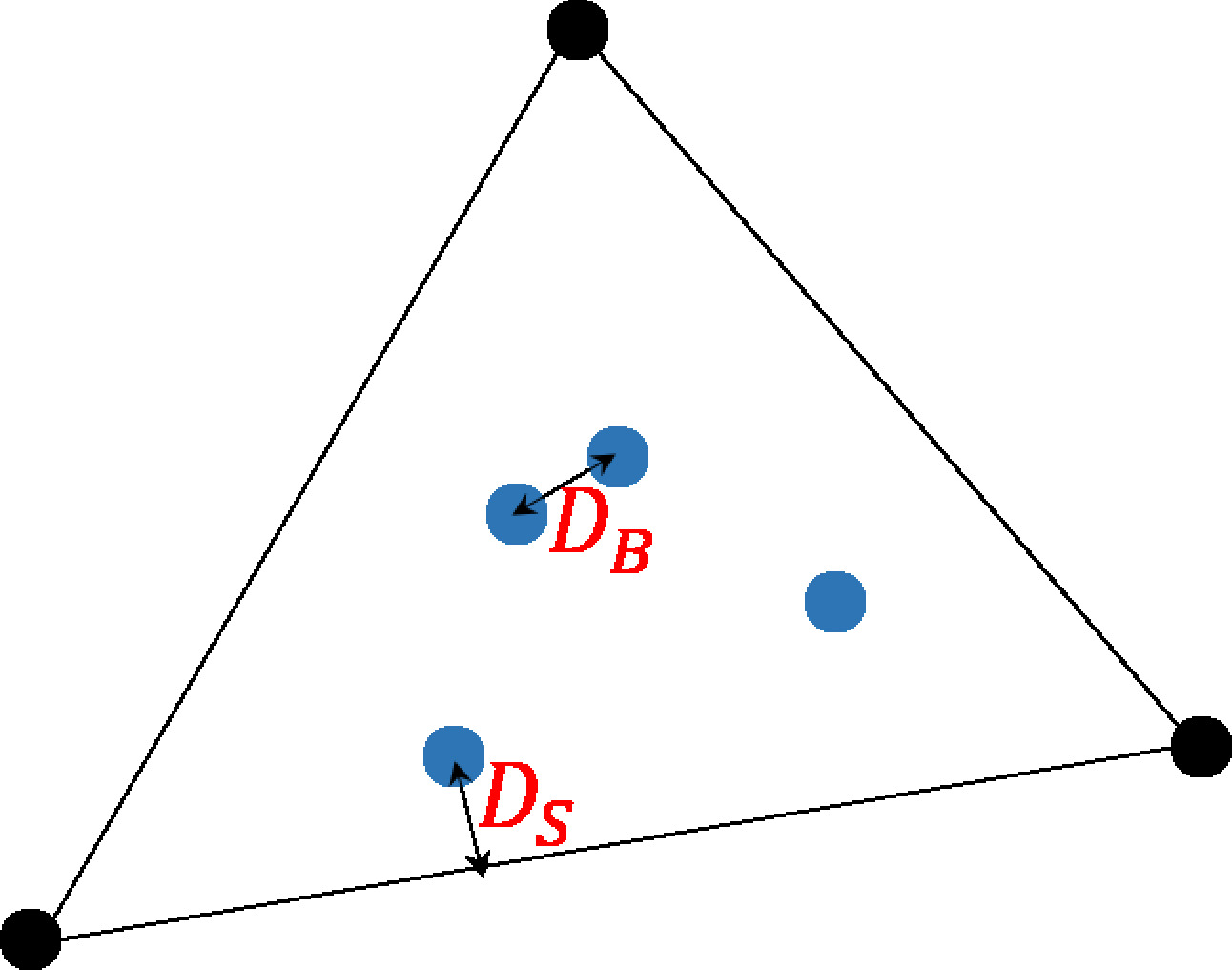}}
 \subfigure[$\mu=1$]{\includegraphics[width=0.27\linewidth]{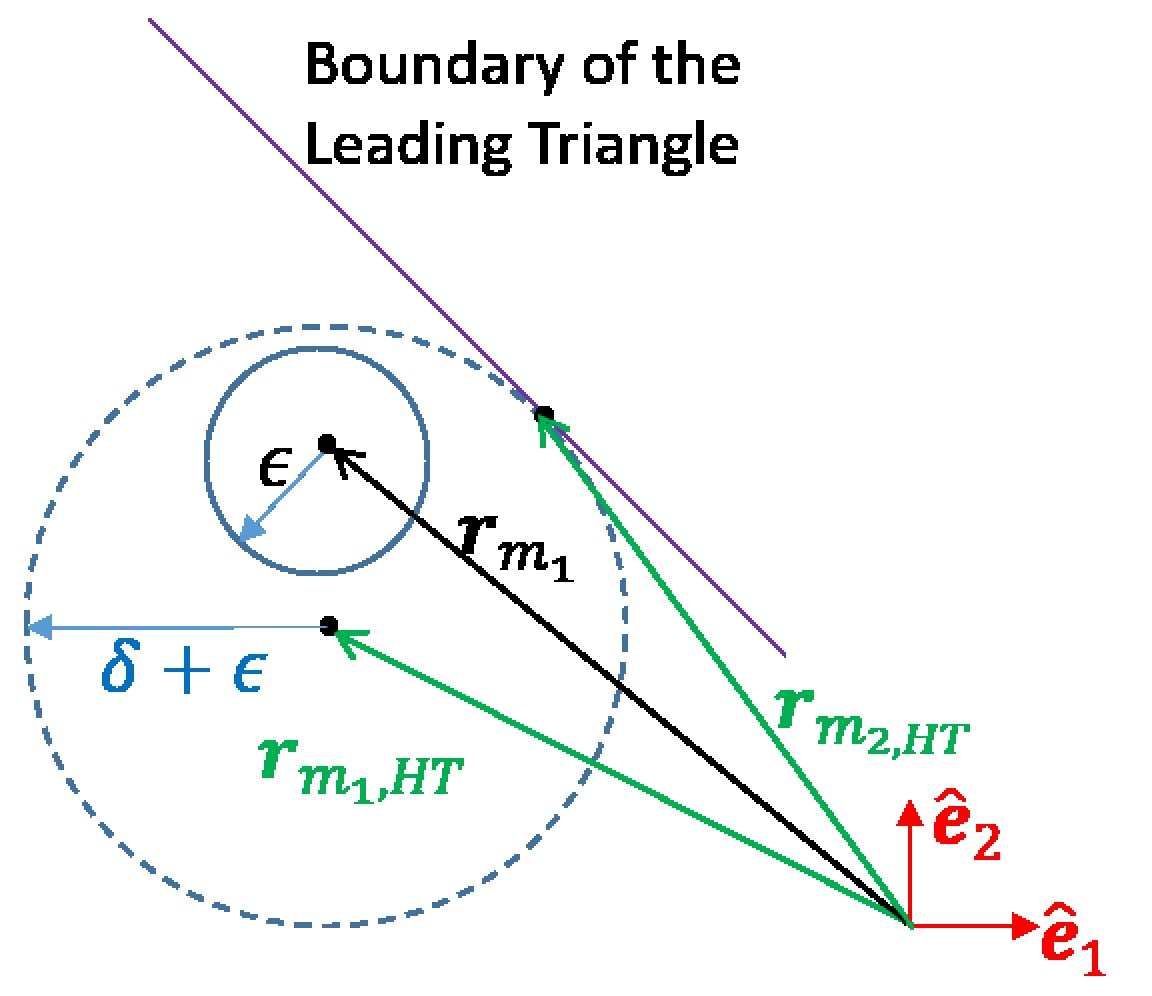}}
\subfigure[$\mu=2$]{\includegraphics[width=0.25\linewidth]{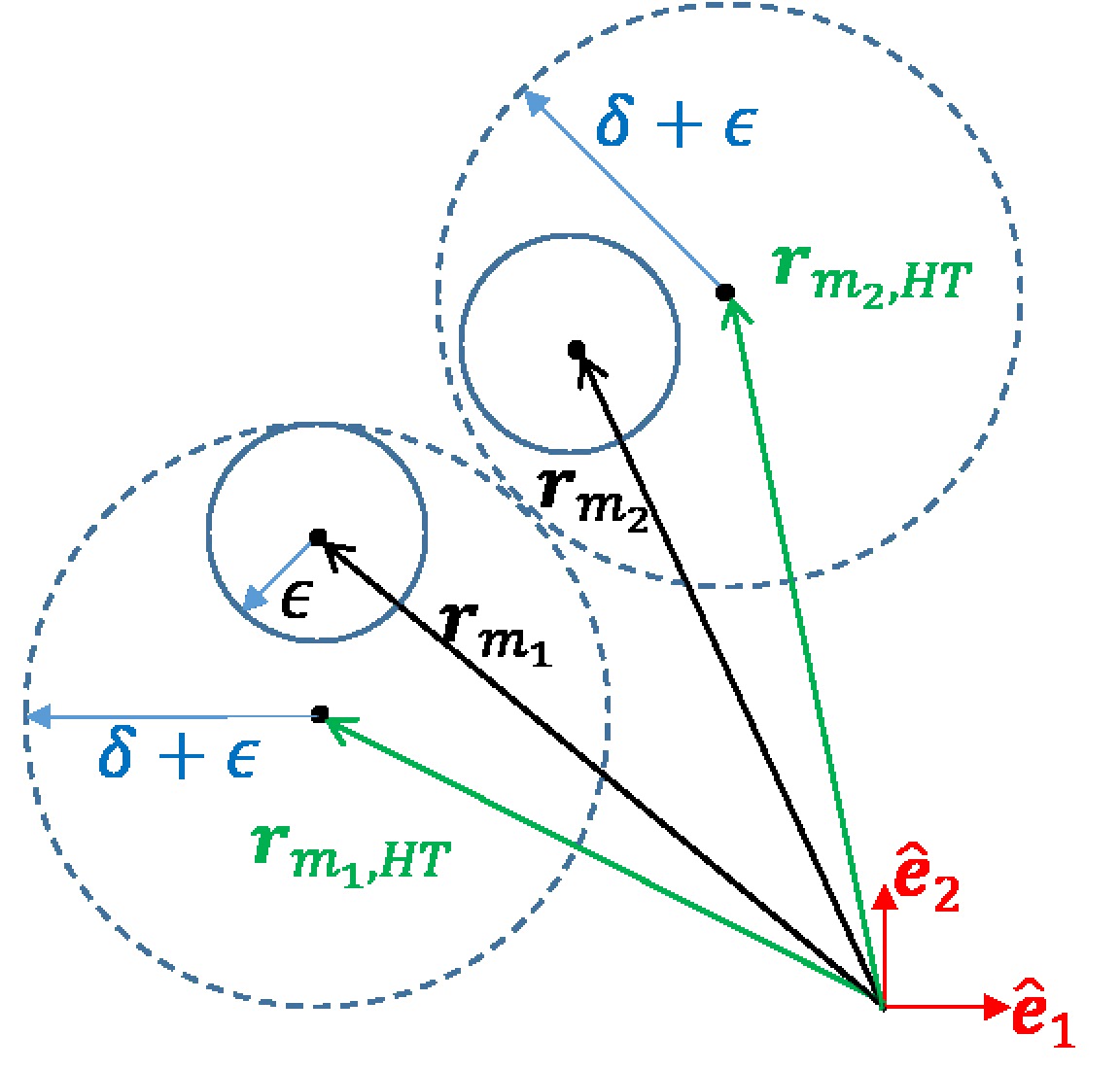}}
 \caption{(a) Minimum distances $D_B$ and $D_S$ at $t_0$. (b)  $ D_S-\epsilon<0.5\left(D_B-2\epsilon\right)$ ($\mu=1$), $\delta_{\mathrm{max}}$ is assigned based on the closest distance from the boundary. (c)  $ 0.5\left(D_B-2\epsilon\right)\leq D_S-\epsilon$ ($\mu=2$), $\delta_{\mathrm{max}}$ is assigned based on agents $m_1$ and $m_2$ having the closest separation distance at $t_0$. $\mathbf{r}_{m_1}$ and $\mathbf{r}_{m_2}$ are the actual positions of points $m_1$ and $m_2$.}
 \vspace{-15pt}
\label{InirFormDsDb}
\end{figure*}

\begin{figure}
\center
\includegraphics[width=3in]{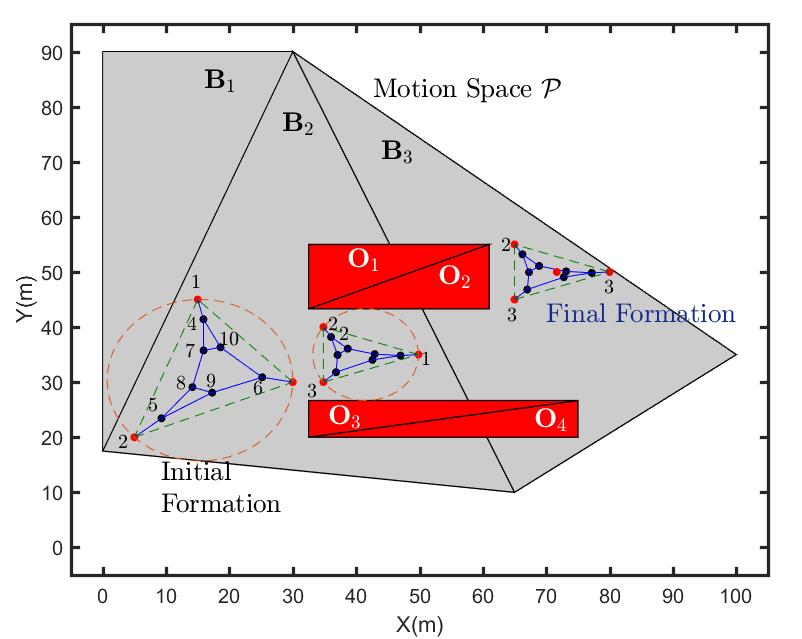}
\caption{Schematic of motion space $\mathbf{B}$.}
\label{initfindorm}
\end{figure}

\subsection{Motion Space Containment}
If safety condition $\psi_2$ is satisfied, then motion space containment is guaranteed by ensuring leaders remain inside the motion space $\mathbf{B}$. Formally, given the formula:
\begin{equation}
\bigwedge_{i\in \mathcal{V}_L}\Box\bigvee_{k=1}^{m_B}\left(\mathbf{\Omega}\left(\mathbf{b}_{k,1},\cdots,\mathbf{b}_{k,\dim+1},\mathbf{r}_i\right)\geq 0\right),\tag{$\psi_{7}$}
\end{equation}
we have:
\begin{theorem}
 If $\psi_2\land\psi_7$ is satisfied, then $\psi_4$ is satisfied.
\end{theorem}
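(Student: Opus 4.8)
The plan is to exploit the decomposition $\mathcal{V}=\mathcal{V}_L\cup\mathcal{V}_F$ together with the convexity of the motion space $\mathbf{B}$, reducing everything to a pointwise‑in‑time argument (legitimate because every operand is under $\Box$). Concretely, $\psi_4$ is the conjunction over all $i\in\mathcal{V}$ of the assertion ``$\mathbf{r}_i$ lies in some $\mathbf{B}_k$''. For the leaders this assertion is verbatim $\psi_7$, so the leader conjuncts of $\psi_4$ follow immediately from the hypothesis. It therefore remains to derive the follower conjuncts from $\psi_2\land\psi_7$.

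For a fixed follower $i\in\mathcal{V}_F$ and a fixed time $t\ge t_0$, I would argue as follows. By $\psi_2$ and the definition of $\mathbf{\Omega}$ in \eqref{eq:omega}, the vector $\mathbf{\Omega}(\htr{\mathbf{r}}_1,\dots,\htr{\mathbf{r}}_{\dim+1},\mathbf{r}_i)$ is nonnegative; since the last row of the matrix in \eqref{eq:omega} forces its entries to sum to one, this exhibits $\mathbf{r}_i(t)$ as a convex combination of the leader vertices $\htr{\mathbf{r}}_1(t),\dots,\htr{\mathbf{r}}_{\dim+1}(t)$. By $\psi_7$ and Definition~1 (each $\mathbf{B}_k\subset\mathbf{B}$), every leader lies in $\mathbf{B}$ at time $t$, and because $\mathbf{B}$ is convex, any convex combination of points of $\mathbf{B}$ again lies in $\mathbf{B}$, so $\mathbf{r}_i(t)\in\mathbf{B}$. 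Finally, reading the simplexes $\mathbf{B}_1,\dots,\mathbf{B}_{m_B}$ as a triangulation that covers $\mathbf{B}$ (so that $\mathbf{B}=\bigcup_k \mathbf{B}_k$), membership in $\mathbf{B}$ is equivalent to membership in some $\mathbf{B}_k$, i.e. $\bigvee_k(\mathbf{\Omega}(\mathbf{b}_{k,1},\dots,\mathbf{b}_{k,\dim+1},\mathbf{r}_i)\ge 0)$ holds. Ranging over all $i$ and all $t$ reassembles $\psi_4$.

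The main obstacle I anticipate is a bookkeeping one about whose positions certify containment. The follower‑containment condition $\psi_2$ places $\mathbf{r}_i$ inside the simplex spanned by the \emph{desired} leader positions $\htr{\mathbf{r}}_1,\dots,\htr{\mathbf{r}}_{\dim+1}$, whereas $\psi_7$ certifies that the \emph{actual} leader positions $\mathbf{r}_1,\dots,\mathbf{r}_{\dim+1}$ lie in $\mathbf{B}$. The convexity step needs the vertices of the containing simplex to themselves be in $\mathbf{B}$, so I must reconcile the two: either read the leading simplex in $\psi_2$ with actual leader positions, or supply the auxiliary fact that the desired leader positions also lie in $\mathbf{B}$ (for leaders tracking their commanded trajectory these coincide). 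I would isolate this as a short lemma and fix the convention explicitly before running the convex‑combination argument. A secondary point to state carefully is the covering assumption $\mathbf{B}=\bigcup_k\mathbf{B}_k$, since Definition~1 literally asserts only $\bigcup_k\mathbf{B}_k\subset\mathbf{B}$; without a genuine triangulation of $\mathbf{B}$ the disjunction in $\psi_4$ is strictly stronger than plain membership in $\mathbf{B}$, and the implication could fail at points of $\mathbf{B}$ not covered by any $\mathbf{B}_k$.
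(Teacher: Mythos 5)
Your proposal is correct, but there is nothing in the paper to compare it against line by line: the paper states Theorem 2 without proof and only remarks that ``Proofs of Theorems 2 and 3 are adapted from \cite{rastgoftar2016continuum}.'' The argument you give---leader conjuncts of $\psi_4$ obtained verbatim from $\psi_7$, follower conjuncts obtained by reading $\psi_2$ as exhibiting $\mathbf{r}_i$ as a convex combination of the leading-simplex vertices (the last row of the matrix in \eqref{eq:omega} forces the coefficients to sum to one) and then invoking convexity of $\mathbf{B}$---is exactly the intended one, and it is the natural proof. Moreover, the two issues you flag are genuine defects of the paper's formalization, not weaknesses of your argument. First, $\psi_2$ constrains $\mathbf{r}_i$ relative to the \emph{desired} leader positions $\htr{\mathbf{r}}_{1},\ldots,\htr{\mathbf{r}}_{\dim+1}$, while $\psi_7$ constrains the \emph{actual} leader positions $\mathbf{r}_{1},\ldots,\mathbf{r}_{\dim+1}$; as literally stated the implication needs a bridging fact. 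Your proposed patch is the right one, and it is available inside the paper: under the collective dynamics model of Section \ref{MAS Collective Dynamics Model}, leaders are driven by $\mathbf{u}_i=\htr{\ddot{\mathbf{r}}}_i$, so with matching initial position and velocity one gets $\mathbf{r}_i\equiv\htr{\mathbf{r}}_i$ for $i\in\mathcal{V}_L$, collapsing the two readings. Second, Definition 1 asserts only $\bigcup_{k=1}^{m_B}\mathbf{B}_k\subset\mathbf{B}$, so membership in the convex set $\mathbf{B}$ does not by itself deliver the disjunction $\bigvee_{k=1}^{m_B}$ appearing in $\psi_4$; the covering assumption $\mathbf{B}=\bigcup_{k=1}^{m_B}\mathbf{B}_k$ (a genuine triangulation, as in the paper's own simulation where $\mathbf{B}=\mathbf{B}_1\cup\mathbf{B}_2\cup\mathbf{B}_3$) must be made explicit. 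With those two conventions fixed as you propose, your proof is complete and supplies what the paper leaves to a citation.
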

\subsection{Obstacle Collision Avoidance}
If safety condition $\psi_2$ is satisfied, then obstacle collision avoidance is guaranteed by ensuring leaders  do not collide with obstacles. Formally, given the formula:
\begin{equation}
\bigwedge_{i\in \mathcal{V}_L}\Box\left(\bigwedge_{k=1}^{m_O}\neg\left(\mathbf{\Omega}\left(\mathbf{o}_{k,1},\cdots,\mathbf{o}_{k,m_B},\mathbf{r}_i\right)\geq 0\right)\right),\tag{$\psi_{8}$}
\end{equation}
we have:
\begin{theorem}
 If $\psi_2\land\psi_{8}$ is satisfied, then $\psi_5$ is satisfied. 
\end{theorem}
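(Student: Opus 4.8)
The plan is to prove the implication by decomposing the conclusion $\psi_5$ along the partition $\mathcal{V}=\mathcal{V}_L\cup\mathcal{V}_F$, so that the outer conjunction of $\psi_5$ factors into a conjunction over leaders and a conjunction over followers. The leader conjuncts of $\psi_5$ are syntactically identical to $\psi_8$---the same $\Box$, the same inner conjunction over the $m_O$ obstacle simplexes, and the same $\mathbf{\Omega}$-test applied to $\mathbf{r}_i$ for $i\in\mathcal{V}_L$---so that half of $\psi_5$ is discharged immediately by the hypothesis $\psi_8$. It therefore remains to show that every \emph{follower} stays strictly outside every obstacle simplex at all times, using only $\psi_2$ together with $\psi_8$.

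First I would fix an arbitrary time, a follower $i\in\mathcal{V}_F$, and an obstacle simplex $\mathbf{O}_k$, and translate the hypotheses into geometry via the Proposition of Section~\ref{sec:triangulation}. From $\psi_2$ the test $\mathbf{\Omega}(\htr{\mathbf{r}}_{1},\cdots,\htr{\mathbf{r}}_{\dim+1},\mathbf{r}_i)\geq\mathbf{0}$ states that $\mathbf{r}_i$ lies in the closed leading simplex, i.e.\ $\mathbf{r}_i=\sum_{j=1}^{\dim+1}\alpha_j\,\htr{\mathbf{r}}_j$ with $\alpha_j\geq 0$ and $\sum_j\alpha_j=1$; from $\psi_8$ each leader position avoids $\mathbf{O}_k$. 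The goal is to conclude $\neg(\mathbf{\Omega}(\mathbf{o}_{k,1},\cdots,\mathbf{o}_{k,\dim+1},\mathbf{r}_i)\geq\mathbf{0})$, i.e.\ that this convex combination of leader positions also lies outside the convex simplex $\mathbf{O}_k$; universally quantifying the resulting statement over time, over $i\in\mathcal{V}_F$, and over $k$ would then reassemble the follower half of $\psi_5$, which with the leader half yields $\psi_5$.

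The hard part is this last step, and it is where convexity must be used in the right direction. In the companion motion-space result convexity is applied to \emph{containment}: a convex combination of points lying inside a convex region stays inside, so a vertex condition on the leaders suffices. \emph{Avoidance} of a convex region, however, is not inherited by convex combinations, since a simplex all of whose vertices lie outside a small obstacle can still engulf it. I therefore expect the proof to need that the leading simplex be \emph{separated} from each obstacle simplex, not merely that its vertices avoid it: for each $k$ I would exhibit a halfspace containing all of $\htr{\mathbf{r}}_1,\dots,\htr{\mathbf{r}}_{\dim+1}$ and disjoint from $\mathbf{O}_k$, so that the convex combination $\mathbf{r}_i$ lies in the same halfspace and hence outside $\mathbf{O}_k$. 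The central obstacle is thus to justify such a separating hyperplane from the stated hypotheses---either by invoking the narrow-channel geometry of the intended application, in which obstacles lie outside the region swept by the leading simplex so that $\psi_8$ does force separation, or by reading $\psi_8$ as a condition on the entire leading simplex rather than only on its $\dim+1$ vertices.
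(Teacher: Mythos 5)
Your decomposition is the right first move, and your diagnosis of the difficulty is exactly on target --- in fact it is more than a difficulty. The paper offers no actual proof of this theorem to compare against: it only remarks that the proofs of Theorems 2 and 3 ``are adapted from'' the earlier continuum-deformation reference. What you have identified is a genuine flaw in the statement as written. The implication $\psi_2\land\psi_8\Rightarrow\psi_5$ fails for precisely the reason you give: avoidance of a convex region is not preserved under convex combination. Concretely, for $d=2$, let the three leaders have zero deviation and remain fixed at the vertices of a large triangle, let a single obstacle simplex $\mathbf{O}_1$ be a small triangle strictly inside the leading triangle, and place one follower at the barycenter of $\mathbf{O}_1$. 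Then $\psi_8$ holds (no leader ever meets $\mathbf{O}_1$), $\psi_2$ holds (the follower lies in the leading simplex), yet $\psi_5$ fails. Continuity does not rescue the claim either: even if all agents start outside every obstacle, a pure translation of the leading simplex across a small obstacle carries an interior follower through it while no vertex ever touches it. So no argument, including the paper's (uncited, unwritten) one, can derive the follower half of $\psi_5$ from these hypotheses as literally stated.

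Your proposed repair is also the correct one, and it matches what the underlying continuum-deformation theory actually uses: there, obstacle avoidance is enforced by requiring the \emph{entire} leading polytope (suitably enlarged by the deviation bound $\delta$ and agent radius $\epsilon$) to remain disjoint from every obstacle --- precisely your separating-halfspace condition, which is a statement about the simplex $\mathcal{T}\left(\htr{\mathbf{r}}_1,\ldots,\htr{\mathbf{r}}_{d+1}\right)$ and not merely about its $d+1$ vertices. Under that strengthened reading of $\psi_8$, your convexity argument closes immediately: the follower is a convex combination of points lying in a closed halfspace disjoint from $\mathbf{O}_k$, hence lies outside $\mathbf{O}_k$. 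In short, your proof attempt is as complete as the hypotheses permit; the gap lies in the theorem statement (and in the paper's formula $\psi_8$, which encodes only a vertex condition), not in your reasoning.
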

Proofs of Theorems 2 and 3 are adapted from \cite{rastgoftar2016continuum}.
\section{Simulation Results}
\label{Simulation Results}
Consider \change{an} MAS with $N=10$ agents evolving in 2 dimensions ($\dim=2$). Agents $1$, $2$, and $3$ are leaders; the remaining agents are followers. Inter-agent communication is defined by the Fig.~\ref{initfindorm} graph, and follower communication weights are listed in Table \ref{Table2}. Follower communication weights are consistent with the initial formation and assigned by Eq.~\eqref{ComWeights}.
  Fig. \ref{initfindorm} also shows MAS initial and final formations.  $\mathbf{B}=\mathbf{B}_1\bigcup\mathbf{B}_2\bigcup\mathbf{B}_3$ defines the motion space, and $\mathcal{O}=\bigcup_{k=1}^4\mathbf{O}_4$ defines obstacles in $\mathbf{B}$.
\begin{table}[t]
\caption{Communication weights $w _{i,i_1}$, $w _{i,i_2}$, and $w _{i,i_3}$}
\label{Table2}
\begin{center}
\begin{tabular}{c l l l l l l }
\hline
$i$ & $i_1$&$i_2$&$i_3$& $w _{i,{i_1}}$&$w_{i,{i_2}}$&$w _{i,{i_{3}}}$ \\
\hline
$4$&$1$& $7$&$10$&	$0.60$&	 $0.20$&	$0.20$\\
$5$&$2$&	$8$&	$9$&	$0.60$&	    $0.20$&	    $0.20$\\
$6$&	$3$&	$9$&	$10$&	$0.60$&	    $0.20$&	    $0.20$\\
$7$&	$4$& $8$ &$10$&	$0.40$&	    $0.36$&	    $0.24$\\
$8$&$5$& $7$ &$9$&	${1\over3}$&	    ${1\over3}$&	    ${1\over3}$\\
$9$&$5$ &$6$& $8$&	$0.31$&	    $0.42$&	    $0.27$\\
$10$&	$4$ &$6$ &$7$&	$0.35$&	    $0.29$&	    $0.36$\\
\hline
\end{tabular}
\end{center}
\end{table}
 The paper assumes all agents are identical with $\beta_r=2$ and $\beta_v=4$. Agent positions are plotted versus time in Figs. \ref{PostionComponents} (a) and  \ref{PostionComponents} (b) with $t\in [0,227.5],~t_0=0s,~t_f=227.5s$.



 \begin{figure}
\centering
\subfigure[]{\includegraphics[width=0.8\linewidth]{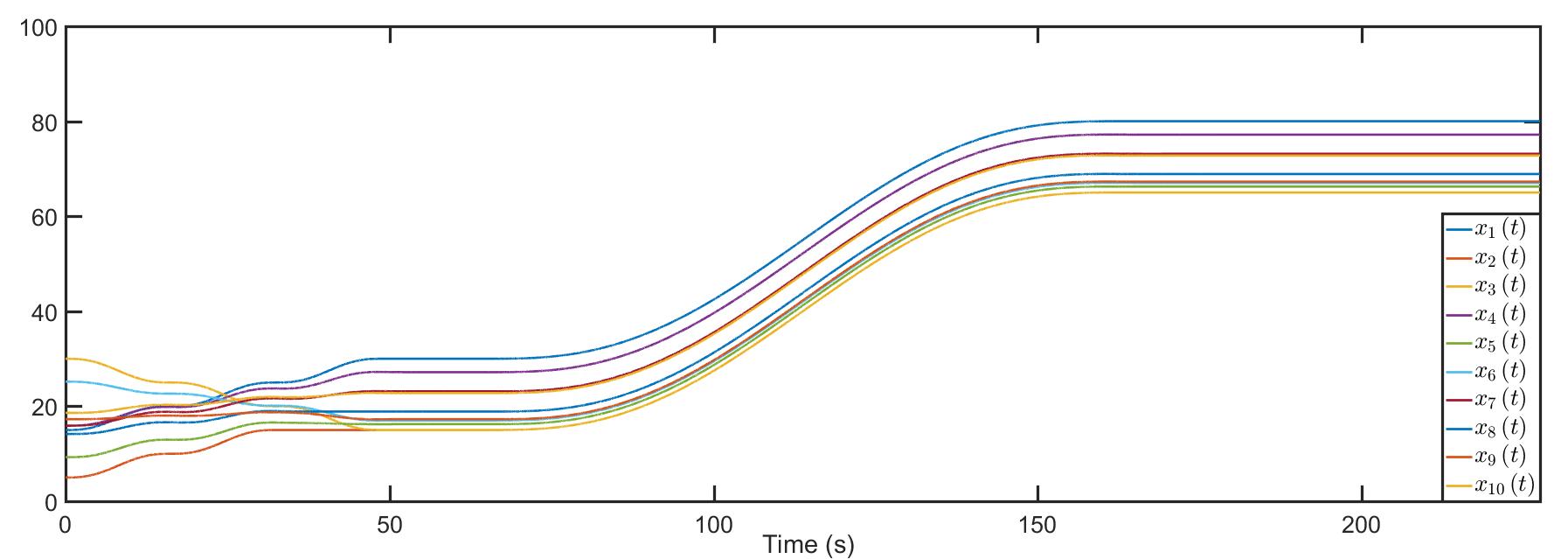}}
\subfigure[]{\includegraphics[width=0.8\linewidth]{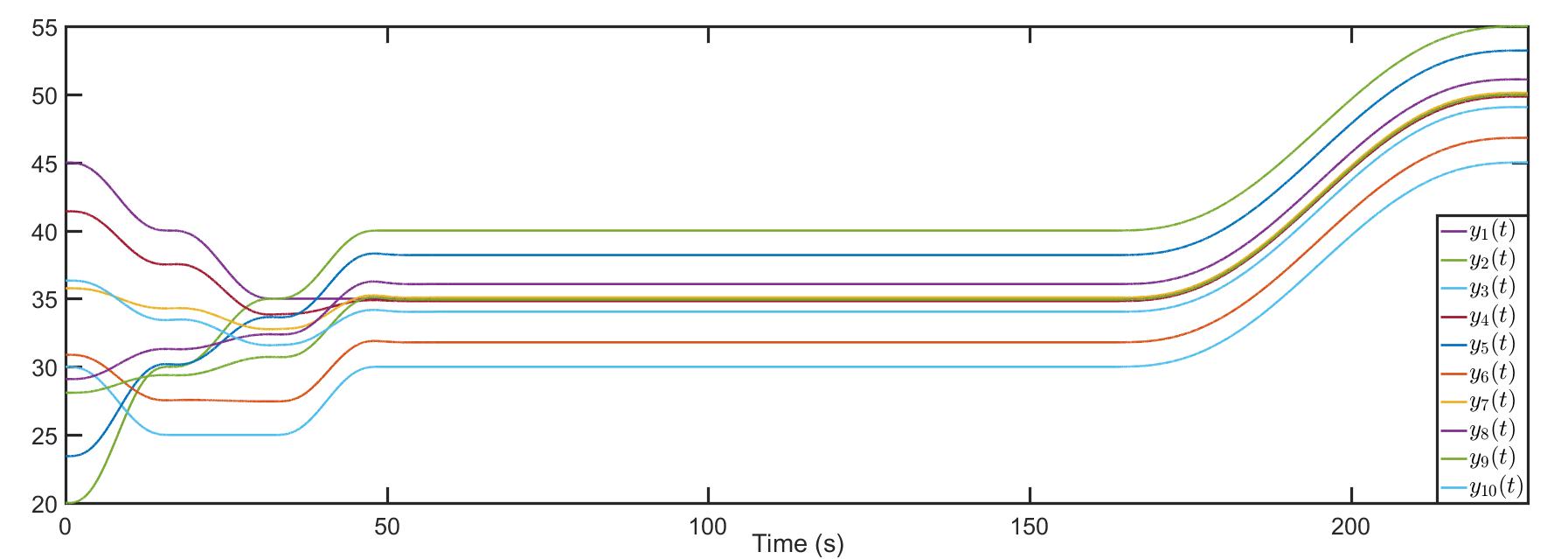}}
\subfigure[]{\includegraphics[width=0.8\linewidth]{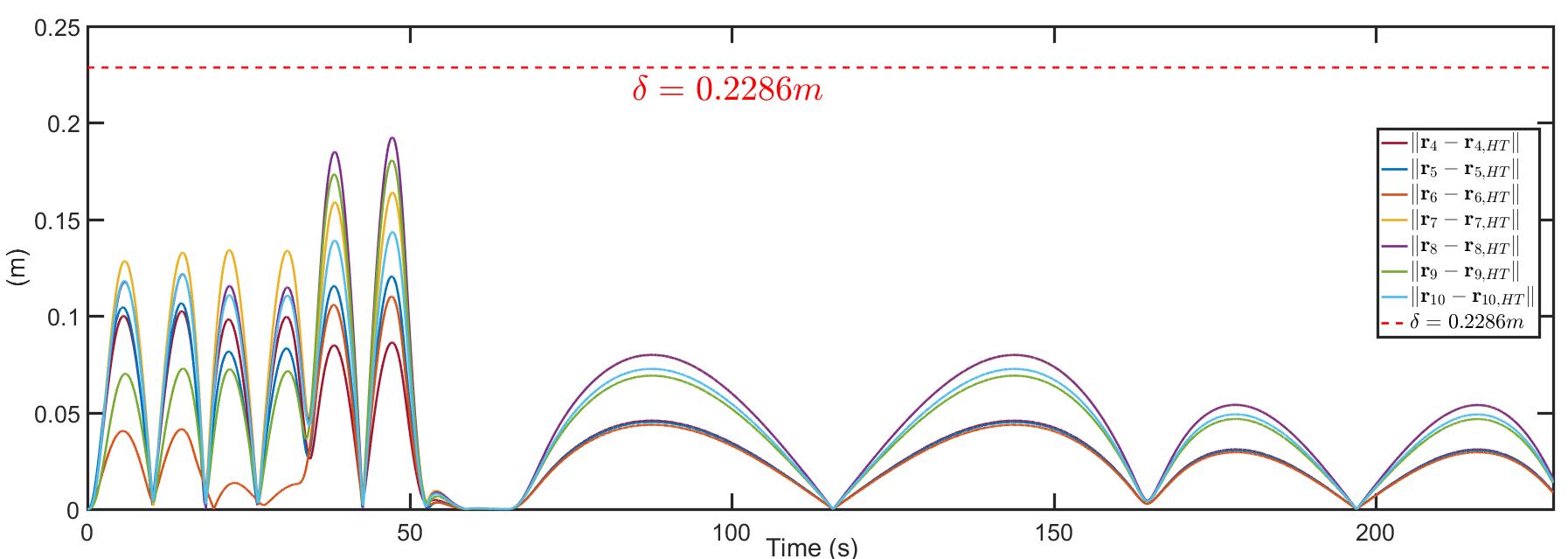}}
\caption{(a,b) $x$ and $y$ components of agents' actual positions versus time; (c) Deviation of follower agents versus time.}
\label{PostionComponents}
\end{figure}
\begin{figure}
\center
\includegraphics[width=2.8 in]{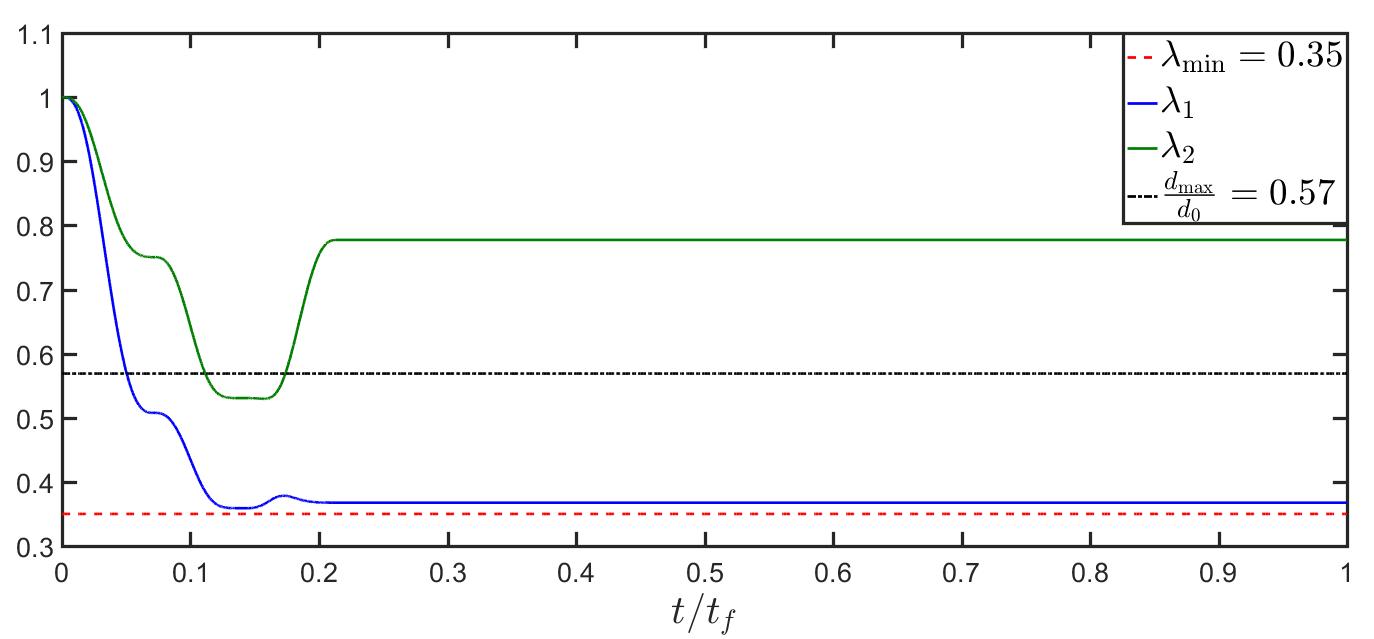}
\caption{Eigenvalues of the matrix $U_D$ versus time}
\label{LAMBDATHETA}
\end{figure}

\textbf{Satisfaction of Safety Condition 1:} Fig. \ref{PostionComponents}(c) plots deviation of every follower versus time confirming that no follower exceeds $\delta=0.2286m$ at any time $t\in[t_0,t_f]s$.

\textbf{Satisfaction of Safety Conditions 2 and 3:} Given MAS initial formation, $D_B=2.7348m$ and $D_S=1.5996m$ are the minimum separation and boundary distances. The paper assumes that each agent is enclosed by a ball with radius $\epsilon=0.25m$, thus, $\delta_{\mathrm{max}}$ $\delta_{\mathrm{ma}}= 1.1174m$. Given $\delta=0.2286$, $\epsilon=0.25m$, and $\delta_{\mathrm{ma}}= 1.1174m$,  $\lambda_{\mathrm{min}}=0.35$ is computed by Eq. \eqref{Lmin}. As shown in Fig. \ref{LAMBDATHETA},  $\mathbf{U}_D$ eigenvalues are greater than $\lambda_{\mathrm{min}}$ at any time $t$, hence, safety condition 2 is satisfied.

\textbf{Satisfaction of Safety Conditions 4 and 5:} Leader paths are plotted in Figs. \ref{leaders123} (a-c). As shown, motion containment and obstacle collision avoidance conditions are satisfied.

\begin{figure*}[!ht]
 \centering
  \subfigure[]{\includegraphics[width=0.32\linewidth]{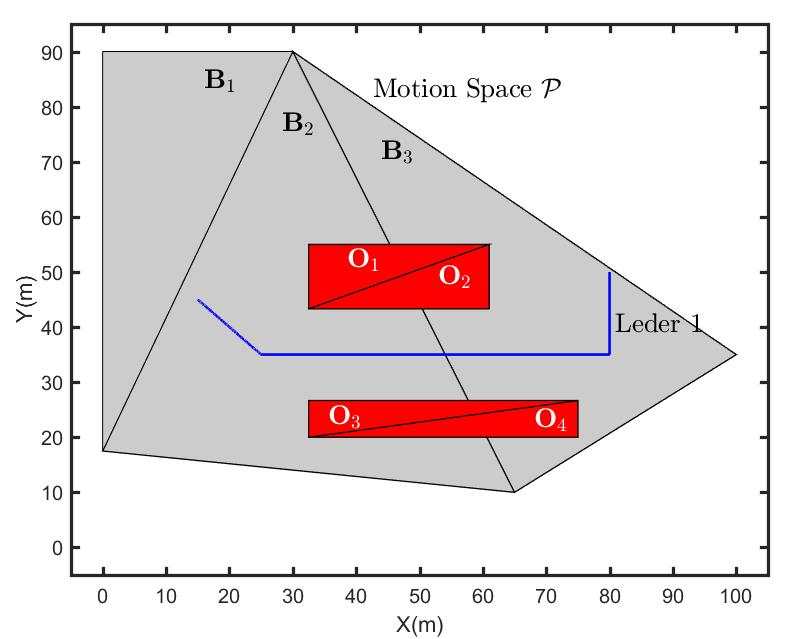}}
 \subfigure[]{\includegraphics[width=0.32\linewidth]{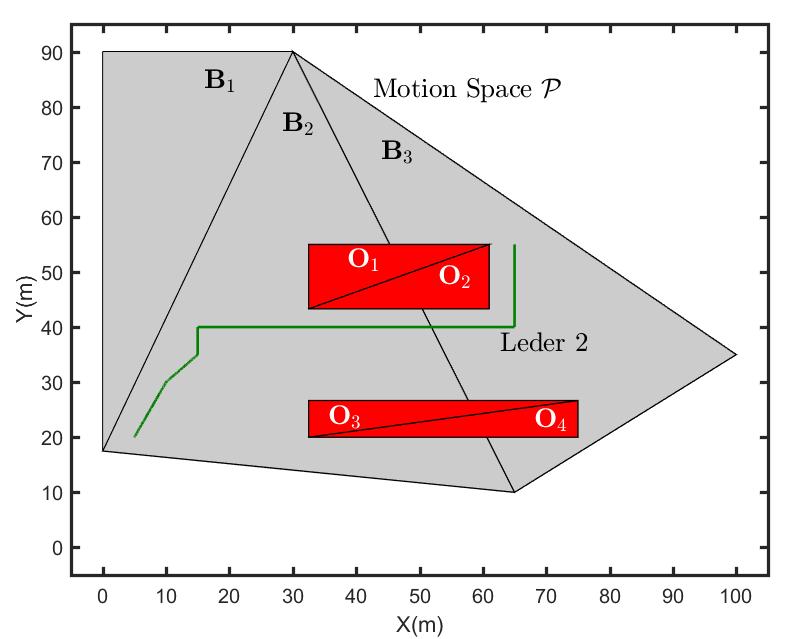}}
\subfigure[]{\includegraphics[width=0.32\linewidth]{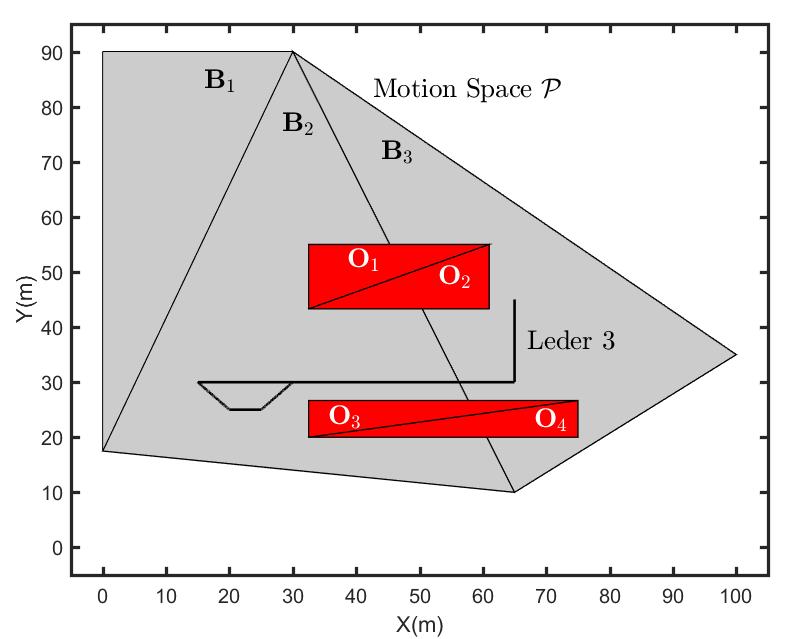}}
 \caption{Paths of the continuum deformation leaders: (a) Leader $1$, (b) Leader $2$, (c) Leader $3$.}
  \vspace{-15pt}
\label{leaders123}
\end{figure*}

\textbf{Satisfaction of Necessary Condition 6:} 
As shown in Fig. \ref{LAMBDATHETA}, $\|\mathbf{r}_i-\mathbf{r}_{i}^{\mathrm{HT}}\|$ tends to zero at final time $t_f$, therefore, the liveness condition 6 is satisfied.


 \section*{Acknowledgements}
 This work was supported in part by National Science Foundation Grant CNS 1739525.

\section{Conclusion}
\label{Conclusion}
\change{
In this paper we formally specified continuum deformation coordination in a $\dim$-dimensional motion space. Using triangulation and tetrahedralization, we developed safety and liveness conditions for continuum deformation. We constructed Linear Temporal Logic (LTL) formulae to check the validity of inter-agent and obstacle collision avoidance as well as agent and motion-space containment. We demonstrated validity of the method with simulation results. The paper shows how a large-scale continuum deformation satisfies the liveness and safety conditions we developed. This formal definition supports efficient specification and computational overhead when designing and deploying a large-scale MAS.}


\bibliographystyle{IEEEtran}
\bibliography{reference}

\end{document}